\newtheorem{thm}{Theorem}
\newtheorem{pro}[thm]{Proposition}
\newtheorem{rmk}[thm]{Remark}
\newtheorem{lem}[thm]{Lemma}
\newcommand{\ovl}{\overline}
\renewcommand{\refname}{References}
\renewcommand{\proofname}{Proof.}
\begin{document}
\renewcommand{\refname}{References}
\renewcommand{\proofname}{Proof.}
\thispagestyle{empty}

\title[Sat solvers for synchronization in non-deterministic automata]{Using Sat solvers for synchronization issues\\
in non-deterministic automata}
\author{Hanan Shabana and Mikhail V. Volkov}%
\address{Hanan Shabana
\newline\hphantom{iiii}	 Institute of Natural Sciences and Mathematics, Ural Federal University
\newline\hphantom{iiii}   Lenina 51, 620000 Yekaterinburg, Russia
\newline\hphantom{iiii}	 Faculty of Electronic Engineering, Menoufia University, Egypt}%
\email{hananshabana22@gmail.com}%
\address{Mikhail V. Volkov
\newline\hphantom{iiii}	 Institute of Natural Sciences and Mathematics, Ural Federal University
\newline\hphantom{iiii}   Lenina 51, 620000 Yekaterinburg, Russia}%
\email{Mikhail.Volkov@usu.ru}%

\thanks{\rm Supported by the Russian Foundation for Basic Research, grant no.\ 16-01-00795, the Ministry of Education and Science of the Russian Federation, project no.\ 1.3253.2017, and the Competitiveness  Enhancement Program of Ural Federal University.}

\maketitle {\small
\begin{quote}
\noindent{\sc Abstract. } We approach the problem of computing a $D_{3}$-synchronizing word of minimum length for a given nondeterministic automaton via its encoding as an instance of SAT and invoking a SAT solver. We also present some experimental results.

\medskip

\noindent{\bf Keywords:} Nondeterministic automaton, synchronizing word, SAT, SAT-solver.
 \end{quote}
}

\section{Background and overview}
\label{sec:intro}
We assume the reader's familiarity with some basic concepts of computational complexity theory that can be found in the early chapters of any general complexity theory text such as, e.g., \cite{Papa}. As far as automata theory is concerned, we have tried to make the paper, to a reasonable extent, self-contained.

One of the significant concepts for digital systems is \emph{synchronization}. It means that all parts of the system are in agreement regarding the present state of the system. This concept is of immense importance in fields such as coding theory, conformance testing, biocomputing, industrial robotics, and many others, and also leads to intriguing mathematical questions, see, e.g., \cite{volkovservey}.

From the viewpoint of mathematics, discrete systems are often modeled as finite automata. A \emph{finite automaton} is a triple $\mathscr{A}=(Q,\Sigma,\delta)$, where $Q$ is a finite non-empty set which elements are referred to as \emph{states}, $\Sigma$ is a finite non-empty set which is called the \emph{input alphabet} and which elements are referred to as \emph{input symbols} or \emph{input letters}, and $\delta$ is a map, called the \emph{transition function}, that describes the action of symbols in $\Sigma$ at states in $Q$.  Finite automata are usually classified into three categories according to the nature of their transition function.

\begin{enumerate}
\item[DFA:] $\mathscr{A}=(Q,\Sigma,\delta)$ is a \emph{deterministic finite automaton} (DFA) if the transition function $\delta$ is a total map $Q\times\Sigma\rightarrow Q$, that is, $\delta (q,s)$  is defined for every  state $q\in Q$ and for every  symbol $s\in\Sigma$. We interpret  $\delta (q,s)$ as the next state where the DFA would move to if it was at the state $q$ and read the symbol $s$.

\item[PFA:] $\mathscr{A}=(Q,\Sigma,\delta)$ is a \emph{partial finite automaton} (PFA) if the transition function $\delta$ is a partial map $Q\times \Sigma\rightarrow Q$, that is, $\delta (q,s)$  is defined for some pairs $(q,s)\in Q\times\Sigma$ but may be undefined for some other pairs. We again interpret  $\delta (q,s)$, provided it is defined, as the next state where the PFA would move to if it was at the state $q$ and read the symbol $s$, and we write $\delta (q,s)=\varnothing$ to indicate that $\delta (q,s)$ is undefined\footnote{It should be noted that in the literature, automata that we call PFAs sometimes are referred to as deterministic finite automata while our DFAs are called \emph{complete} deterministic finite automata.}.

\item[NFA:] $\mathscr{A}=(Q,\Sigma,\delta)$ is a \emph{nondeterministic finite automaton} (NFA) if the  transition function $ \delta $ is a map $ Q \times \Sigma\rightarrow \mathcal{P}(Q)$, where $\mathcal{P}(Q)$ is the power set of $Q$, that is, for every  state $q\in Q$ and for every symbol $s\in\Sigma$, the expression $\delta (q,s)$ is not a single  state, but rather a subset of states. If this subset is non-empty, we interpret it as the set of all possible states where the NFA could move to if it was at the state $q$ and read the symbol $s$. If $\delta (q,s)=\varnothing$, we say that the action of $s$ is undefined at $q$.
\end{enumerate}

Clearly, both DFSs and PFAs can be considered as special instances of NFAs. Therefore, in the sequel, we define all concepts for NFAs, commenting on their specializations for NFAs and PFAs, if necessary.

We represent a given automaton $\mathscr{A}=(Q,\Sigma,\delta)$ by the labeled directed graph with the vertex set $Q$, the label alphabet $\Sigma$, and the set of labeled edges
\[
\{q\xrightarrow{s}q'\mid q,q'\in Q,\ s\in\Sigma,\ q'\in\delta(q,s)\}.
\]
Figure \ref{dfa&nfa} shows examples of a DFA (left) and a NFA (right). We adopt the convention that edges with multiple labels represent bunches of parallel edges. Thus, the edge $1\xrightarrow{a,c}0$ in Figure~\ref{dfa&nfa} represents the two parallel edges $1\xrightarrow{a}0$ and $1\xrightarrow{c}0$, etc.

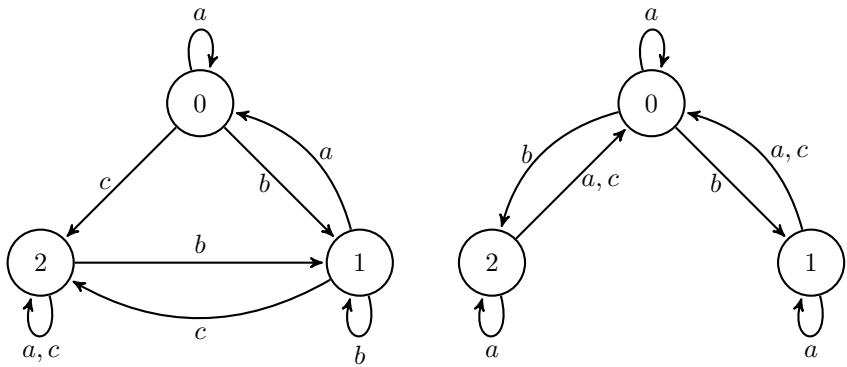
\begin{figure}[h]

\centering

\begin{tikzpicture}[->,>=stealth',shorten >=1pt,auto,node distance=3cm,
thick,main node/.style={circle,draw,font=\sffamily\Large\bfseries}]

  \node[state]         (0)         {$0$};
  \node[state]         (1) [below right of=0] {$1$};
  \node[state]         (2) [below left of=0] {$2$};

  \path (0) edge   [loop above]         node {$a$} (0)
            edge   node[left]           {$b$} (1)
            edge   [ right]             node [left] {$c$} (2)

        (1) edge   [loop below]         node {$b$} (1)
            edge   [bend right]         node[right] {$a$} (0)
            edge   [bend left]          node {$c$} (2)

        (2) edge   [loop below]           node {$a,c$} (2)
            edge                         node {$b$} (1);
  \node                (7) [right of=0] {};

  \node[state]         (3) [right of=7]        {$0$};
  \node[state]         (4) [below right of=3] {$1$};
  \node[state]         (5) [below left of=3] {$2$};

  \path (3) edge    [loop above]      node {$a$} (3)
            edge    node[left]        {$b$} (4)
            edge    [bend right]      node [left] {$b$} (5)

        (4) edge    [loop below]      node {$a$} (4)
            edge   [bend right]      node[right] {$a,c$} (3)

        (5) edge    [loop below]       node {$a$} (5)
            edge    node[right]        {$a,c$} (3);
\end{tikzpicture}
\caption{A DFA (left) and a NFA (right) with $Q=\{0,1,2\}$ and $\Sigma=\{a,b,c\}$}
\label{dfa&nfa}
\end{figure}

Given an alphabet $\Sigma$, a \emph{word} over $\Sigma$ is a finite sequence of symbols from $\Sigma$. We do not exclude the empty sequence from this definition; that is, we allow the \emph{empty word}. The set of all words over $\Sigma$ including the empty word is denoted by $\Sigma^*$ and is referred to as the \emph{free monoid over $\Sigma$}. If $w=a_1\cdots a_\ell$ with $a_1,\dots,a_\ell\in\Sigma$ is a non-empty word over $\Sigma$, the number $\ell$ is said to be the \emph{length} of $w$ and is denoted by $|w|$. The length of the empty word is defined to be 0. The set of all words of a given length $\ell$ over $\Sigma$ is denoted by $\Sigma^\ell$.

For every NFA $\mathscr{A}=\langle Q,\Sigma,\delta\rangle$, the transition function $\delta$ can be extended to a function $\mathcal{P}(Q)\times\Sigma^{*}\rightarrow \mathcal{P}(Q)$ (still denoted by $\delta$) by induction on the length of $w\in\Sigma^*$. If $|w|=0$, that is, $w$ is the empty word,  then, for each $X\subseteq Q$, we let $\delta(X,w)=X$. If $|w|>0$, we represent $w$ as $w=sw'$ with $w'\in\Sigma ^{*}$ and $s\in \Sigma$ and, for each $X\subseteq Q$, let $\delta(X,w)=\bigcup_{q\in X}\delta(\delta(q,s),w')$. (The right hand side of the latter equality is defined by the induction assumption since $|w'|<|w|$.) To lighten the notation, we write $q.w$ for $\delta(q,w)$ and $X.w$ for $\delta(X,w)$ whenever we deal with a fixed automaton.

Here we are interested in \emph{synchronization} of finite automata. The idea of \emph{synchronization} is as follows: for a given automaton, we are looking for an input word that directs the  automaton to a specific state, no matter at which state the automaton was at the beginning. This input is called a \emph{synchronizing word}, and if an automaton  possesses such a  word, it is called \emph{synchronizing}.

The above informal idea of synchronization is easy to formalize for DFAs but for NFAs it admits several non-equivalent formalizations. First, we recall the three versions that were suggested in~\cite{4} and have been widely studied thereafter.

Let $\mathscr{A}=(Q,\Sigma,\delta)$ be an NFA, $i=1,2,3$. A word $w\in\Sigma^*$ is said to be $D_{i}$-\emph{syn\-chro\-nizing} for $\mathscr{A}$ if it satisfies the condition $(D_i)$ from the list below:
\begin{enumerate}
\item[$(D_{1})$:]  $\forall q\in Q$ $(q.w\ne\varnothing\ \wedge\ |q.w|=|Q.w|=1)$;
\item[$(D_{2})$:]  $\forall q\in Q$ $(q.w\ne\varnothing\ \wedge\ q.w=Q.w)$;
\item[$(D_{3})$:]  $\bigcap_{q\in Q}q.w\ne\varnothing$.
\end{enumerate}
A NFA is called $D_{i}$-\emph{synchronizing}, $i=1,2,3$, if it has a $D_{i}$-synchronizing word\footnote{In some sources, the requirement $q.w\ne\varnothing$ is not explicitly included in the definition of $D_2$-syn\-chronization. If one omits this requirement, every word that is nowhere defined becomes $D_{2}$-synchronizing. We think this version of synchronization hardly is of independent interest since it readily reduces to $D_2$-synchronization in our sense in the automaton obtained from $\mathscr{A}$ by adding a new sink state and making all transitions undefined in $\mathscr{A}$ lead to this sink state.}. It should be clear that every $D_{1}$-synchronizing word is also $D_{2}$-synchronizing and every $D_{2}$-synchronizing word is also $D_{3}$-synchronizing. The converse is not true in general. For an illustration, consider the NFA $\mathscr{A}$ in Figure \ref{dfa&nfa} (right). It is easy to see that for it, the word $abc$ is $D_{1}$-synchronizing, the word $ab$ is $D_{2}$-synchronizing, but not $D_{1}$-synchronizing, and the word $a$ is $D_{3}$-synchronizing, but not $D_{2}$-synchronizing. Moreover, the NFA obtained from $\mathscr{A}$ by omitting the letter $c$ is $D_{2}$-synchronizing, but not $D_{1}$-synchronizing, while the NFA obtained from $\mathscr{A}$ by omitting the letters $b$ and $c$ is $D_{3}$-synchronizing, but not $D_{2}$-synchronizing.

Yet another version of synchronization for NFAs has been studied by Martyugin, see, e.g., \cite{Martyugin}. Let $\mathscr{A}=(Q,\Sigma,\delta)$ be an NFA. A word $w=a_1\cdots a_\ell$ with $a_1,\dots,a_\ell\in\Sigma$ is said to be \emph{carefully synchronizing} for $\mathscr{A}$ if it satisfies the condition $(C)$, being the conjunction of $(C1)$--$(C3)$ below:
\begin{enumerate}
\item[$(C1)$:] $\delta(q,a_1)$ is defined for all $q\in Q$,
\item[$(C2)$:] $\delta(q,a_i)$ with $1<i\le \ell$ is defined for all $q\in Q.a_1\cdots a_{i-1}$,
\item[$(C3)$:] $|Q.w|=1$.
\end{enumerate}
Thus, when $w$ is applied at any state in $Q$, no undefined transition occurs during the course of application. Clearly, every carefully synchronizing word is also $D_{1}$-synchronizing but the converse is not true. For instance, the word $abc$ is not carefully synchronizing for the NFA $\mathscr{A}$ in Figure \ref{dfa&nfa} (right); moreover, this NFA possesses no carefully synchronizing word. We call a NFA \emph{carefully synchronizing} if it admits a carefully synchronizing word. Thus, if we denote by $\mathbf{D}_i$, $i=1,2,3$, the class of all $D_{i}$-synchronizing NFAs and by $\mathbf{C}$ the class of all carefully synchronizing NFAs, we have the following strict inclusions:
\[
\mathbf{C}\subset\mathbf{D}_1\subset \mathbf{D}_2\subset\mathbf{D}_3.
\]

In this paper, we consider $D_3$-synchronization. As it can been seen from the above discussion, it is the most general version of synchronization for NFAs amongst those considered in the literature so far. Besides that, we think that it reasonably reflects the basic nature of non-determinism. Indeed, if an NFA $\mathscr{A}=(Q,\Sigma,\delta)$ is used as an \emph{acceptor}, we designate some states in $Q$ as initial and final and then say that $\mathscr{A}$ \emph{accepts} a word $w\in\Sigma^*$ whenever there exists a path labeled $w$ that starts at an initial state and terminates at a final state. The definition of a $D_{3}$-syn\-chronizing word very much resembles this concept: a word $w\in\Sigma^*$ is $D_{3}$-syn\-chro\-nizing whenever for each $q\in Q$, there exists a path labeled $w$ that starts at $q$ and terminates at a certain common state, independent of $q$. In both cases we do not require that a starting state uniquely determines the path labeled $w$ nor that every path labeled $w$ with a given starting state should arrive at a final/common state.

We also mention in passing that $D_3$-synchronization gets a very transparent meaning within a standard matrix representation of NFAs. In this representation, an NFA $\mathscr{A}=(Q,\Sigma,\delta)$ becomes a collection of $|\Sigma|$ Boolean $Q\times Q$-matrices where to each input symbol $s\in\Sigma$, a  matrix $M(s)$ is assigned such that the $(q,q')$-entry of $M(s)$ is 1 if $q'\in\delta(q,s)$ and 0 otherwise. Then it is not hard to realize that the automaton $\mathscr{A}$ is $D_{3}$-synchronizing if and only if some product of the matrices $M(s)$, $s\in\Sigma$, has a column consisting entirely of 1s.

Some information about $D_3$-synchronization can be found in Chapter~8 of Ito's monograph~\cite{Ito}; recently, some aspects of $D_3$-synchronization has been considered in \cite{BJO15,GGJ16,DonZantema17,fuzzy}. (The papers \cite{BJO15,GGJ16} use the language of matrices rather than that of automata.)

It is easy to see that each of the conditions $(C)$, $(D_{1})$, $(D_{2})$, $(D_{3})$ leads to the same notion when restricted to PFAs. Thus, for PFAs and, in particular, for DFAs, we call a word \emph{synchronizing} if it satisfies any of these conditions. A PFA (in particular, a DFA) is said to be \emph{synchronizing} if it has a synchronizing word.

It is known that the problem of determining whether or not a DFA with $n$ states is synchronizing can be solved in $O(n^2)$ time, see, e.g., \cite{volkovservey} or \cite{Sandberg}. If such a DFA is synchronizing, it always has a synchronizing word of length $(n^3-n)/6$, see \cite{pin}, and it is conjectured that a synchronizing DFA with $n$ states must have a synchronizing word of length $(n-1)^{2}$ (this is the famous \v{C}ern\'{y} conjecture). In contrast, the problem of determining  whether or not a given PFA is synchronizing is known to be PSPACE-complete and there is no polynomial in $n$ upper bound on the length of synchronizing words for a synchronizing PFA with $n$ states. (These results were found by Rystsov in the early 1980s~\cite{Rystsov:1980,Rystsov:1983} and later rediscovered (and strengthened) by Martyugin~\cite{Martyugin:2012}.) This readily implies that the problem of determining whether or not a given NFA is $D_3$-synchronizing as well as the problem of finding a $D_3$-synchronizing word of minimum length are computationally hard.

Nowadays, a popular approach to computationally hard problems consists in encoding them as instances of the Boolean satisfiability problem (SAT) that are then fed to a SAT-solver, that is, a specialized program designed to solve instances of SAT. We refer to this approach as the \emph{SAT-solver method}. Modern SAT solvers can solve instances with hundreds of thousands of variables and millions of clauses within a few minutes. Thanks to this remarkable progress, the SAT-solver method has proved to be very efficient for an extremely  wide range of problems of both theoretical and practical importance. Its applications are far too numerous to be listed here; some examples of such applications can be found in the survey~\cite{HKR}, which also gives a smart introduction into the area. Here we mention only three recent papers that deal with two difficult problems related to finite automata. Geldenhuys, van der Merwe, and van Zijl \cite{africa} have used the SAT-solver method to attack the minimization problem for NFAs. In the minimization problem, which is known to be PSPACE-complete~\cite{JR91}, an NFA $\mathscr{A}$ with designated initial and final states is given, and one looks for an NFA of minimum size that accepts the same set of words as $\mathscr{A}$. Skvortsov and Tipikin \cite{Tipikin} have applied the method to find a synchronizing word of minimum length for a given DFA with two input symbols, and G\"uni\c{c}en, Erdem, and Yenig\"un \cite{GEY} have extended their approach to DFAs with arbitrary input alphabets. The problem of finding a synchronizing word of minimum length is known to be hard for the complexity class $\mathrm{FP}^\mathrm{NP[log]}$, the functional analogue of the class of problems solvable by a deterministic polynomial-time Turing machine that has an access to an oracle for an NP-complete problem, with the number of queries being logarithmic in the size of the input~\cite{Olschewski&Ummels:2010}.

In the present paper, we use the SAT-solver method to approach the problem of computing a $D_{3}$-synchronizing word of minimum length for a given NFA. It should be stressed that neither the encoding of NFAs used in~\cite{africa} nor the encoding of synchronization used in~\cite{Tipikin,GEY} work for our problem, and therefore, we have had to invent essentially different encodings.

The rest of the paper is divided into three sections. Section \ref{encoding} describes our basic encoding and Section \ref{experimental} presents implementation details and some of our experimental results. The final section contains several concluding remarks and a discussion of possible further developments.

\section{Encoding}
\label{encoding}

By the encoding of a problem, we mean a polynomial reduction from this problem to SAT. First, let us precisely formulate the problem which we are interested in.

\begin{tcolorbox}
D3W (the existence of a $D_{3}$-synchronizing word of a given length):

\textsc{Input}: A NFA $\mathscr{A}$  with two input symbols and  a positive integer $\ell$.

\textsc{Output}: YES if $\mathscr{A}$ has a $D_{3}$-synchronizing word of length $\ell$; NO otherwise.
\end{tcolorbox}

The integer $\ell$ is assumed to be given in unary. With $\ell$ given in binary, a polynomial reduction from D3W to SAT is hardly possible. Indeed, it is known that every $D_{3}$-synchronizing NFA with $n$ states has a $D_{3}$-synchronizing word of length at most $2^n$, see~\cite[Proposition 8.3.10]{Ito}. Hence, given a NFA  $\mathscr{A}$ with $n$ states and two input symbols, the answer to the problem D3W for the instance $(\mathscr{A},2^n)$ is YES if and only if $\mathscr{A}$ is $D_{3}$-synchronizing. As it was mentioned, the problem of determining whether or not a given NFA is $D_3$-synchronizing is PSPACE-complete, whence the version of D3W in which the integer parameter is given in binary is PSPACE-hard. On the other hand, SAT is an archetypical problem in NP, and clearly, the existence of a polynomial reduction from a PSPACE-hard problem to a problem in NP would imply that the polynomial hierarchy collapses at level~1. While, as it is usual in complexity theory, the question of whether or not the polynomial hierarchy collapses at any level is open, a common opinion is that it does not.

In contrast, the version of D3W with the integer parameter given in unary is easily seen to belong to NP. Indeed, given an  instance $(\mathscr{A},\ell)$ of D3W in this setting, one has right to guess a word $w$ of length $\ell$ over the input alphabet of $\mathscr{A}$ as $w$ is obviously of polynomial size in terms of the size of the instance.  Then one just checks whether or not $w$ is $D_{3}$-synchronizing for $\mathscr{A}$, and time spent for this check is clearly polynomial in the size of $(\mathscr{A},\ell)$. By Cook's classic theorem (see, e.g., \cite[Theorem 8.2]{Papa}), SAT is NP-complete, and by the very definition of NP-completeness, there exists a polynomial reduction from our version of D3W to SAT.

Recall that an instance of SAT is a pair $(V,C)$, where $V$ is a set of Boolean variables and $C$ is a collection of clauses over $V$. (A \emph{clause} over $V$ is a disjunction of literals and a \emph{literal} is either a variable in $V$ or the negation of a variable in~$V$.) Any \emph{truth assignment} on $V$, i.e., any map $\varphi\colon V\to\{0,1\}$, extends to a map $C\to\{0,1\}$ (still denoted by $\varphi$) via the usual rules of propositional calculus: $\varphi(\neg x)=1-\varphi(x)$, $\varphi(x\vee y)=\max\{\varphi(x),\varphi(y)\}$. A truth assignment $\varphi$ \emph{satisfies} $C$ if $\varphi(c)=1$ for all $c\in C$. The answer to an instance $(V,C)$ is YES if $(V,C)$ has a \emph{satisfying assignment} (i.e., a truth assignment on $V$ that satisfies $C$) and NO otherwise.

Thus, a polynomial reduction from D3W to SAT is an algorithm that, given an arbitrary instance $(\mathscr{A},\ell)$ of D3W, constructs, in polynomial time with respect to the size of $(\mathscr{A},\ell)$, an instance $(V,C)$ of SAT such that the answer to $(\mathscr{A},\ell)$ is YES if and only if so is the answer to $(V,C)$. Of course, neither a pure existence statement nor any general construction that can be extracted from one of the proofs of Cook's theorem can be used for our purposes. We need a sort of ``practical'' reduction: it should be explicit, easy to implement, and economical in the sense that the degrees of the polynomials that bound the number of variables in $V$ and the number of clauses in $C$ in terms of the size of $(\mathscr{A},\ell)$ should be as small as possible.

In the following presentation of our encoding, precise definitions and statements are interwoven with less formal comments explaining the ``physical'' meaning of variables and clauses we introduce and with estimations of their numbers.

So, take a NFA $\mathscr{A}=(Q,\Sigma,\delta)$ and an integer $\ell>0$. Denote the size of $Q$ by $n$ and fix some numbering of the states in $Q$ so that $Q=\{q_1,\dots,q_n\}$. Recall that we consider the problem D3W for NFAs with two input symbols, so let $\Sigma=\{0,1\}$.

We start with introducing the variables used in the instance $(V,C)$ of SAT that encodes $(\mathscr{A},\ell)$. The set $V$ consists of three sorts of variables:  \emph{letter variables}, \emph{token variables}, and \emph{synchronization variables}.

The letter variables are $x_1,\dots,x_\ell$. They are just placeholders for the input symbols 0 and 1. There is an obvious 1-1 correspondence between the truth assignments on the set $X=\{x_1,\dots,x_\ell\}$ and the words in $\Sigma^\ell$: given a truth assignment $\varphi\colon X\to\{0,1\}$, the corresponding word is $\varphi(x_1)\cdots\varphi(x_\ell)$, and, conversely, given a word $a_1\cdots a_\ell$ with $a_1,\dots,a_\ell\in\{0,1\}$, the corresponding truth assignment is $x_t\mapsto a_t$ for each $t=1,\dots,\ell$.

The token variables are $y_{ij}^t$ where $i,j=1,\dots,n$ and $t=0,1,\dots,\ell$. To explain the role of these variables, we use a solitaire-like game $\Gamma$ on the labeled directed graph representing the NFA $\mathscr{A}$. In the initial position of $\Gamma$, each state $q_i\in Q$ holds exactly one token denoted $\mathbf{i}$. In the course of the game, tokens migrate and may multiply or disappear according to certain rules that will be specified a bit later, when we describe the clauses in $C$. For the moment, it is sufficient to say that the rules are designed to ensure that the variable $y_{ij}^t$ gets value 1 in a satisfying truth assignment for $C$ if and only if after $t$ rounds of the game, one of the tokens held by the state $q_j$ is $\mathbf{i}$.

The synchronization variables are $z_1,\dots,z_n$. They play the role of indicators showing which states may occur at the end of the synchronization process. By the definition of $D_3$-synchronization, the answer to the instance $(\mathscr{A},\ell)$ is YES if and only if there exists a word $w\in\Sigma^\ell$ such that $\bigcap_{q\in Q}q.w\ne\varnothing$. The clauses of $C$ will be chosen so that the variable $z_j$ gets value 1 in a satisfying assignment for $C$ if and only the state $q_j$ belongs to the set $\bigcap_{q\in Q}q.w$, where $w$ is the word defined by the restriction of the assignment to $X$.

We see that the total number of variables in $V$ is $\ell+n^2(\ell+1)+n$.

Now we turn to constructing the set of clauses $C$. It is the disjoint union of $\ell+1$ sets: the set $C_0$ of \emph{initial clauses}, the sets $C_t$, $t=1,\dots,\ell$, of \emph{transition clauses}, and the set $S$ of \emph{synchronization clauses}.

The clauses in $C_0$ describes the initial position of our game $\Gamma$. As mentioned, in this position, each state $q_i\in Q$ holds the token $\mathbf{i}$ and nothing else. It order to reflect this setting, we let $C_0$ consist of the clauses $y_{11}^0,\dots,y_{nn}^0$ along with all clauses of the form $\neg y_{ij}^0$ with $i\ne j$. Altogether, $C_0$ contains $n^2$ one-literal clauses.

They are the clauses in $C_t$, $t=1,\dots,\ell$, that encode the rules of $\Gamma$. The rules are as follows. At each move an input symbol $a\in\Sigma$ is chosen. Then for each state $q\in Q$ such that $q.a\ne\varnothing$, all tokens that were held by $q$ slide along the edges labeled $a$ to all states in the set $q.a$. (If $|q.a|>1$, then every token held by $q$ multiplies to $|q.a|$ identical tokens, one for each state in $q.a$.) If $q.a=\varnothing$, then all tokens that were held by $q$ disappear. Thus, after the move, the token $\mathbf{i}$ occurs at a state $p\in Q$ if and only if $p\in q.a$ for some state $q$ that had held $\mathbf{i}$ just prior to the move.

For an illustration, Figure~\ref{moves} demonstrates the initial distribution of tokens on a 5-state NFA with the input alphabet $\{0,1\}$ (top), along with the outcomes of the first move, depending on whether 0 or 1 has been chosen for the move (bottom left and bottom right, respectively).

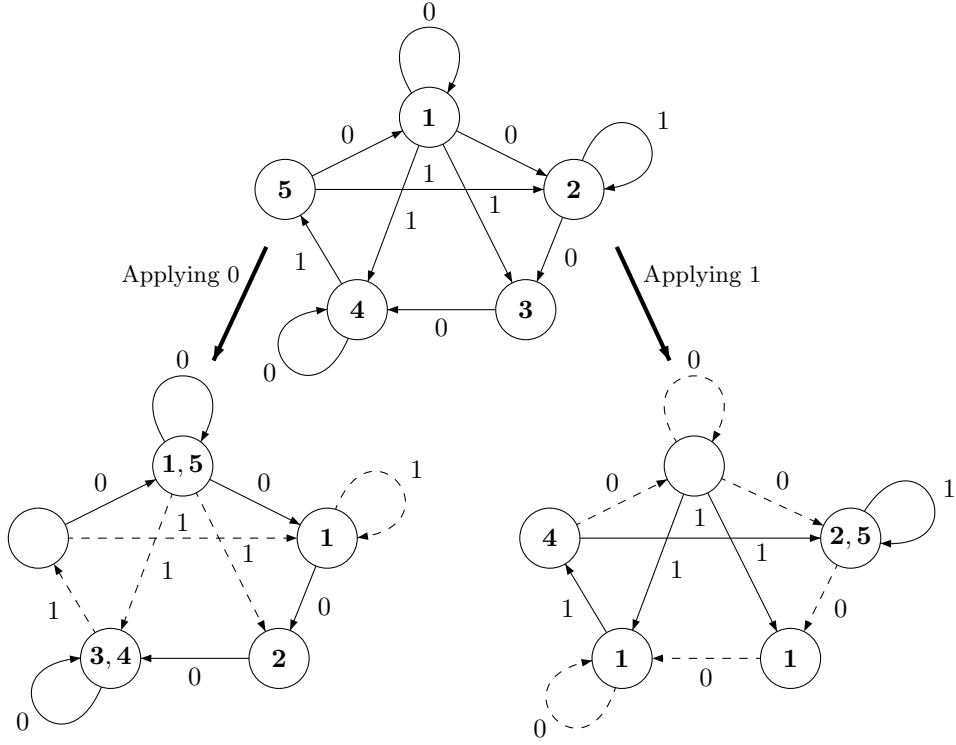
\begin{figure}[t]
\begin{center}
\unitlength=.8mm
\begin{picture}(166,125)(0,-55)
\node[Nframe=n,Nfill=n](b0)(46.0,33.0){}
\node[Nframe=n,Nfill=n](b1)(33.0,5.0){}
\drawedge[ELside=r,linewidth=.7,AHdist=2.41,%
AHangle=20,AHLength=2.5,AHlength=2.41](b0,b1){\small Applying $0$}
\node[Nframe=n,Nfill=n](b0)(100.0,33.0){}
\node[Nframe=n,Nfill=n](b1)(113.0,5.0){}
\drawedge[linewidth=.7,AHdist=2.41,AHangle=20,
AHLength=2.5,AHlength=2.41](b0,b1){\small Applying $1$}
\node(n10)(71.0,50.0){$\mathbf{1}$}
\node(n11)(47.0,38.0){$\mathbf{5}$}
\node(n12)(95.0,38.0){$\mathbf{2}$}
\node(n13)(59.0,18.0){$\mathbf{4}$}
\node(n14)(87.0,18.0){$\mathbf{3}$}
\drawedge[ELdist=1.77](n10,n12){$0$}
\drawedge[ELdist=2.33](n12,n14){$0$}
\drawedge[ELdist=1.59](n14,n13){$0$}
\drawedge[ELdist=2.24](n13,n11){$1$}
\drawedge[ELdist=1.83](n11,n10){$0$}
\drawedge[ELdist=1.19](n11,n12){$1$}
\drawedge[ELdist=1.71](n10,n14){$1$}
\drawedge[ELdist=1.71](n10,n13){$1$}
\drawloop[loopangle=90](n10){$0$}
\drawloop[loopangle=215.88](n13){$0$}
\drawloop[ELdist=2.08,loopangle=37.3](n12){$1$}
\node(n0)(30.0,-8.0){$\mathbf{1},\mathbf{5}$}
\node(n1)(6.0,-20.0){}
\node(n2)(54.0,-20.0){$\mathbf{1}$}
\node(n3)(18.0,-40.0){$\mathbf{3},\mathbf{4}$}
\node(n4)(46.0,-40.0){$\mathbf{2}$}
\node(n5)(115,-8.0){}
\node(n6)(91,-20.0){$\mathbf{4}$}
\node(n7)(141,-20.0){$\mathbf{2},\mathbf{5}$}
\node(n8)(103,-40.0){$\mathbf{1}$}
\node(n9)(131,-40.0){$\mathbf{1}$}
\drawedge[ELdist=1.77](n0,n2){$0$}
\drawedge[ELdist=2.33](n2,n4){$0$}
\drawedge[ELdist=1.59](n4,n3){$0$}
\drawedge[dash={1.4}{.7},ELdist=2.31](n3,n1){$1$}
\drawedge[dash={1.4}{.7},ELdist=2.31](n0,n3){$1$}
\drawedge[ELdist=1.83](n1,n0){$0$}
\drawedge[dash={1.4}{.7},ELdist=1.19](n1,n2){$1$}
\drawedge[dash={1.4}{.7},ELdist=1.71](n0,n4){$1$}
\drawloop[loopangle=215.88](n3){$0$}
\drawloop[loopangle=90](n0){$0$}
\drawloop[dash={1.4}{.7},ELdist=2.08,loopangle=37.3](n2){$1$}
\drawedge[dash={1.4}{.7},ELdist=2.31](n5,n7){$0$}
\drawedge[dash={1.4}{.7},ELdist=2.21](n7,n9){$0$}
\drawedge[dash={1.4}{.7},ELdist=1.85](n9,n8){$0$}
\drawedge[ELdist=1.86](n8,n6){$1$}
\drawedge[ELdist=1.86](n5,n8){$1$}
\drawedge[dash={1.4}{.7},ELdist=1.83](n6,n5){$0$}
\drawedge[ELdist=1.98](n6,n7){$1$}
\drawedge[ELdist=2.07](n5,n9){$1$}
\drawloop[dash={1.4}{.7},loopangle=90](n5){$0$}
\drawloop[dash={1.4}{.7},loopangle=220.86](n8){$0$}
\drawloop[ELdist=1.84,loopangle=26.57](n7){$1$}
\end{picture}
\caption{Redistribution of tokens after the first move}\label{moves}
\end{center}
\end{figure}

The following observation is immediate.
\begin{lem}
\label{lem:moves}
Suppose that in the game $\Gamma$ played on $\mathscr{A}=(Q,\Sigma,\delta)$, the sequence of chosen symbols forms a word $w\in\Sigma^*$. Then for each $i=1,\dots,n$, the set of states holding the token $\mathbf{i}$ at the end of the game is $q_i.w$.
\end{lem}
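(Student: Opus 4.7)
The plan is to prove the lemma by induction on the length of the chosen sequence $w$, mirroring exactly the inductive definition of the extended transition function given earlier in the paper. This is the natural approach because the rules of $\Gamma$ are defined one move at a time, and the recursive extension $\delta(X,sw')=\bigcup_{q\in X}\delta(\delta(q,s),w')$ is likewise a step-by-step formula.

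For the base case, I would take $w$ to be the empty word, i.e.\ no move has been played. By the initial setup of $\Gamma$, the state $q_i$ holds the token $\mathbf{i}$ and no other state does, so the set of states holding $\mathbf{i}$ is $\{q_i\}$. On the other hand, by definition of the extended transition function, $q_i.\varepsilon=\{q_i\}$, and the claim holds.

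For the inductive step, I would suppose the claim is true for all sequences of length at most $k$, fix a sequence of length $k+1$, and write it as $w'a$ with $|w'|=k$ and $a\in\Sigma$. After the first $k$ moves, the induction hypothesis tells me that the set of states holding $\mathbf{i}$ is precisely $q_i.w'$. Now I would apply the explicit rule of $\Gamma$ for the $(k+1)$-st move: a state $p$ holds token $\mathbf{i}$ after this move if and only if $p\in q.a$ for some state $q$ that held $\mathbf{i}$ just before the move. In symbols, the set of states holding $\mathbf{i}$ after the move is $\bigcup_{q\in q_i.w'}q.a$, which is exactly $(q_i.w').a=q_i.w'a$ by the recursive definition of $\delta$ on $\mathcal{P}(Q)\times\Sigma^*$ (rewritten in the suffix form, or applied iteratively letter by letter). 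This closes the induction.

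I do not expect a serious obstacle: once the rules of $\Gamma$ are lined up with the inductive clause of the definition of $\delta$, the proof is bookkeeping. The only point I would be careful about is that the recursive definition in the text is phrased as $\delta(X,sw')$ (reading the first letter), whereas in the induction I process the last letter; this is a non-issue since the extended $\delta$ is a monoid action, so iterating one letter at a time from either end yields the same set, but I would mention this briefly to keep the argument airtight.
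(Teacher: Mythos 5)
Your induction is correct and complete; the paper itself offers no argument at all (the lemma is introduced with ``The following observation is immediate''), so your write-up simply makes explicit the bookkeeping the authors leave to the reader, and it is the natural way to do so. The one subtlety you rightly flag --- that the paper's recursive definition of $\delta$ peels off the \emph{first} letter while your inductive step peels off the \emph{last} --- is handled adequately, though note that the identity $X.(w'a)=(X.w').a$ is not the definition itself but a consequence of it, proved by an equally routine induction on $|w'|$; mentioning that (or instead running the induction on prefixes, tracking the token distribution after $t$ moves as in Lemma~\ref{lem:induction}) closes the argument completely.
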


Now we express the rules of $\Gamma$ by formulas of propositional logic. For a state $q\in Q$, let $P_0(q)$ and $P_1(q)$ stand for the sets of all preimages of $q$ under the actions of the input symbols 0 and respectively 1, that is, if $a$ is either of the two symbols, $P_a(q)=\{p\in Q\mid q\in p.a\}$. Consider for every $t=1,\dots,\ell$ and all $i,j=1,\dots,n$, the following formulas:
\[
\Psi_{ij}^t:\quad  y_{ij}^t\Longleftrightarrow\Bigl(x_t\wedge\bigvee_{q_k\in P_1(q_j)}y_{ik}^{t-1}\Bigr)\vee \Bigl(\neg x_t\wedge\bigvee_{q_h\in P_0(q_j)}y_{ih}^{t-1}\Bigr).
\]
Observe that the equivalence $\Psi_{ij}^t$ just translates in the language of propositional logic our propagation rule for the tokens that says that the token $\mathbf{i}$ occurs at the state $q_j$ after $t$ moves if and only if one of the following alternatives takes place:
\begin{itemize}
\item the $t$-th move was done with the input symbol 1 and one of the preimages of $q_j$ under the actions of 1 was holding $\mathbf{i}$ after $t-1$ moves, or
\item the $t$-th move was done with the input symbol 0 and one of the preimages of $q_j$ under the actions of 0 was holding $\mathbf{i}$ after $t-1$ moves.
\end{itemize}

\begin{lem}
\label{lem:induction}
For every $t=0,1,\dots,\ell$, every truth assignment $\varphi$ on the set $X$ of letter variables has a unique extension $\ovl{\varphi}$ to the token variables $y_{ij}^s$ that makes the clauses in $C_0$ and the formulas $\Psi_{ij}^s$ hold true $(i,j=1,\dots,n, \ s=1,\dots,t)$. The token variable $y_{ij}^s$ gets value 1 under $\ovl{\varphi}$ if and only if after the moves $\varphi(x_1),\dots,\varphi(x_s)$ of the game $\Gamma$, one of the tokens held by the state $q_j$ is $\mathbf{i}$.
\end{lem}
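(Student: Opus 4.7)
The plan is to proceed by induction on $t$, since the formulas $\Psi_{ij}^s$ propagate information from round $s-1$ to round $s$, mirroring exactly the one-step dynamics of the game $\Gamma$.

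For the base case $t=0$, the only constraints on the token variables are the one-literal clauses in $C_0$: these force $\ovl{\varphi}(y_{ii}^0)=1$ for every $i$ and $\ovl{\varphi}(y_{ij}^0)=0$ for every $i\ne j$. So $\ovl{\varphi}$ is uniquely determined on the $y_{ij}^0$, and its values agree with the initial position of $\Gamma$ in which each state $q_i$ holds only the token $\mathbf{i}$.

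For the inductive step, assume the claim holds for $t-1$, so that $\ovl{\varphi}$ is already uniquely determined on all $y_{ij}^s$ with $s<t$, and those values correctly record the contents of the states after the moves $\varphi(x_1),\dots,\varphi(x_{t-1})$. The right-hand side of $\Psi_{ij}^t$ involves only $x_t$ and variables $y_{ik}^{t-1}$, all of which are already fixed by $\varphi$ and the inductive hypothesis. Hence the equivalence $\Psi_{ij}^t$ forces a unique value for $y_{ij}^t$, so the extension exists and is unique through round $t$. It remains to verify that this forced value matches the game. Set $a=\varphi(x_t)$. If $a=1$, then in $\Psi_{ij}^t$ the first disjunct is active and the second is vacuously false, so $\ovl{\varphi}(y_{ij}^t)=1$ iff some $q_k\in P_1(q_j)$ satisfies $\ovl{\varphi}(y_{ik}^{t-1})=1$; by the induction hypothesis the latter holds iff, after the first $t-1$ moves, some preimage of $q_j$ under the action of $1$ held the token $\mathbf{i}$, which by the propagation rule of $\Gamma$ is exactly the condition that $q_j$ holds $\mathbf{i}$ after the move $a=1$. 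The case $a=0$ is symmetric, using $P_0(q_j)$ instead.

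The only delicate point is to confirm that the two disjuncts in $\Psi_{ij}^t$ faithfully model the rule ``tokens disappear when the transition is undefined''. This is automatic: if $P_a(q_j)=\varnothing$ then the corresponding $\bigvee$ is an empty disjunction, hence false, so $y_{ij}^t$ is forced to $0$, which matches the fact that no token can reach $q_j$ via an edge labeled $a$. Combining the two cases completes the induction, and the concluding assertion about the meaning of $\ovl{\varphi}(y_{ij}^s)=1$ is precisely what the induction establishes at each step, with Lemma~\ref{lem:moves} providing the global reformulation in terms of $q_i.w$ if one wishes to state the content of the round after the full word has been read.
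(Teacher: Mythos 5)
Your proposal is correct and follows essentially the same route as the paper: induction on $t$, with the one-literal clauses of $C_0$ forcing the base case and the equivalences $\Psi_{ij}^t$ forcing a unique value of $y_{ij}^t$ from $\varphi(x_t)$ and the already-determined round-$(t-1)$ variables, which matches the game rule. Your explicit case split on $\varphi(x_t)$ and the remark about empty preimage sets merely spell out what the paper delegates to the discussion preceding the lemma, so no substantive difference.
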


\begin{proof}
We induct on $t$. The indiction basis $t=0$ is clear: we have to satisfy the clauses in $C_0$ and the only way to satisfy a one-literal clause is to assign value 1 to its only literal. Hence, independently of $\varphi$, we have to set for all $i,j=1,\dots,n$,
\[
\ovl{\varphi}(y_{ij}^0)=\begin{cases}
1&\text{if $i=j$},\\
0&\text{otherwise}.
\end{cases}
\]
Observe that then, in the accordance with the initial setting of the game $\Gamma$, the variable $y_{ij}^0$ gets value 1 exactly when the token held by the state $q_j$ is $\mathbf{i}$.

Now suppose that $t>0$ and there exists a unique way to define $\ovl{\varphi}(y_{ij}^s)$ for all $i,j=1,\dots,n$, $s=0,\dots,t-1$, such that the clauses in $C_0$ and the formulas $\Psi_{ij}^s$ with $i,j=1,\dots,n$ and $s=1,\dots,t-1$ hold true. If the variable $x_t$ is assigned the value $\varphi(x_t)$, the value of the right hand side of each equivalence  $\Psi_{ij}^t$ is uniquely defined, and to make this equivalence hold true, we must assign the value to the left hand side, that is, the variable $y_{ij}^t$.  This gives a unique way to extend $\ovl{\varphi}$ to the variables $y_{ij}^t$, where $i,j=1,\dots,n$. As observed prior to the formulation of the lemma, the equivalences $\Psi_{ij}^t$ express the rule of $\Gamma$. Therefore the token $\mathbf{i}$ will migrate to the state $q_j$ after the move $\varphi(x_t)$ if and only if the variable $y_{ij}^t$ gets value~1 under this extension.
\end{proof}

For each $t=1,\dots,\ell$, we define the set $C_t$ as the set of all clauses of a suitable CNF (conjunctive normal form) equivalent to $\bigwedge\limits_{1\le i,j\le n}\Psi_{ij}^t$. In our basic encoding, the set $C_t$ consists of the following clauses:
\begin{gather}
\label{eq:long}\neg y_{ij}^t\vee x_t\vee\bigvee_{q_h\in P_0(q_j)}y_{ih}^{t-1},\qquad \neg y_{ij}^t\vee \neg x_t\vee\bigvee_{q_k\in P_1(q_j)}y_{ik}^{t-1},\\
\label{eq:short1}y_{ij}^t\vee \neg x_t\vee \neg y_{ik}^{t-1}\  \text{ for each $q_k\in P_1(q_j)$},\\
\label{eq:short2}y_{ij}^t\vee x_t\vee \neg y_{ih}^{t-1}\ \text{ for each $q_h\in P_0(q_j)$}.
\end{gather}
The verification of the equivalence between $\bigwedge\limits_{1\le i,j\le n}\Psi_{ij}^t$ and the conjunction of the clauses in \eqref{eq:long}--\eqref{eq:short2} is routine, and we omit it.

It may be worth explaining how the clauses of the form \eqref{eq:long}--\eqref{eq:short2} are understood in the case when one of the sets $P_0(q_j)$ or $P_1(q_j)$ or both of these sets happen to be empty. In~\eqref{eq:long} the disjunctions over the empty sets are omitted so that if, say, $P_0(q_j)=\varnothing$, then the first clause in \eqref{eq:long} reduces to $\neg y_{ij}^t\vee x_t$. As for~\eqref{eq:short1} or~\eqref{eq:short2}, these clauses disappear whenever $P_1(q_j)$ or, respectively $P_0(q_j)$ are empty.

In order to calculate the number of clauses in $C_t$, denote by $m$ the number of all \emph{transitions in $\mathscr{A}$}, that is, triples $(q,a,q')\in Q\times\Sigma\times Q$ with $q'\in\delta(q,a)$. Clearly, for each fixed $i$, the number $\sum_{j=1}^n(|P_1(q_j)|+|P_0(q_j)|)$ of clauses of the forms~\eqref{eq:short1} and~\eqref{eq:short2} is equal to $m$, whence the total number of such ``short'' clauses is $mn$. As for ``long'' clauses in~\eqref{eq:long}, there are at most two such clauses for each fixed pair $(i,j)$, whence their total number does not exceed $2n^2$.  Altogether, $|C_t|\le n(m+2n)$ for each $t=1,\dots,\ell$.

Lemma~\ref{lem:moves} readily implies that a word $w=a_1\cdots a_\ell$ is $D_3$-synchronizing for $\mathscr{A}$ if and only if after the moves $a_1,\dots,a_\ell$ in the game $\Gamma$ on $\mathscr{A}$, some state $q_j$ holds all tokens $\mathbf{1},\dots,\mathbf{n}$. This is equivalent to saying that the formula
\begin{equation}
\label{eq:synchronization}
\bigvee_{j=1}^n\bigwedge_{i=1}^n y_{ij}^\ell
\end{equation}
holds true under the extension, specified in Lemma~\ref{lem:induction}, of the truth assignment on $X$ defined by $w$. A little difficulty is that a direct conversion of the formula \eqref{eq:synchronization} into a CNF produces $2^n$ clauses. To overcome this difficulty, we use a standard trick for which we need new variables (this is why we introduce synchronization variables). Let $S$ consist of the following $n^2+1$ clauses:
\[
\bigvee_{j=1}^n z_j\ \text{ and }\  \neg z_j\vee y_{ij}^\ell\ \text{ for all }\ i,j=1,\dots,n.
\]
It is easy to see that the set $S$ and the formula \eqref{eq:synchronization} are equisatisfiable; moreover, if $Y=\{y_{ij}^\ell\mid i,j=1,\dots,n\}$ and $Z=\{z_1,\dots,z_n\}$, then every truth assignment on $Y$  that satisfies \eqref{eq:synchronization} can be extended to a truth assignment on $Y\cup Z$ that satisfies $S$, and, conversely, for every truth assignment on $Y\cup Z$ that satisfies $S$, its restriction to $Y$ satisfies \eqref{eq:synchronization}.

The whole set $C=S\cup\bigcup_{t=0}^\ell C_t$ consists of at most $n(m+2n)\ell+2n^2+1$ clauses. The number of transitions in a NFA with $n$ states two input symbols is upper-bounded by $2n^2$, whence $|C|\le 2\ell n^3+2(\ell+1)n^2+1$. Thus, constructing $(V,C)$ from $\mathscr{A}$ takes time polynomial in $n$ and $\ell$.  Summarizing the above discussion, we arrive at the main result of the section.

\begin{thm}
\label{thm:reduction}
An NFA $\mathscr{A}$ has a $D_3$-synchronizing word of length $\ell$ if and only if the instance $(V,C)$ of SAT constructed above is satisfiable, and the construction takes
time polynomial in the size of $\mathscr{A}$ and the value of $\ell$. Moreover, by the construction, there is a 1-1 correspondence between the $D_3$-synchronizing words of length $\ell$ for $\mathscr{A}$ and the restrictions of satisfying assignments of $(V,C)$ to the letter variables.
\end{thm}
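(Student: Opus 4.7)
My plan is to assemble the machinery that has already been built: the polynomial bound on the number of variables and clauses recorded just above the theorem disposes of the complexity claim, while Lemmas~\ref{lem:moves} and~\ref{lem:induction} reduce the satisfiability equivalence to essentially bookkeeping. The 1-1 correspondence will then follow from the uniqueness part of Lemma~\ref{lem:induction} together with the canonical bijection between $\Sigma^\ell$ and truth assignments on $X$.

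For the ``only if'' direction, I would start from a $D_3$-synchronizing word $w=a_1\cdots a_\ell$, set $\varphi(x_t):=a_t$ on the letter variables, and invoke Lemma~\ref{lem:induction} to extend $\varphi$ uniquely to an assignment $\ovl{\varphi}$ on the token variables satisfying $C_0$ and every equivalence $\Psi_{ij}^t$. Because the clauses of each $C_t$ are a CNF equivalent to $\bigwedge_{i,j}\Psi_{ij}^t$ (with the empty-preimage degeneracies handled by the remark following \eqref{eq:long}--\eqref{eq:short2}), this disposes of the initial and transition clauses. Pick any $q_{j_0}\in\bigcap_i q_i.w$, which exists by $D_3$-synchronization; by Lemma~\ref{lem:moves} and Lemma~\ref{lem:induction}, $\ovl{\varphi}(y_{ij_0}^\ell)=1$ for every $i$. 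I would then define $\ovl{\varphi}(z_j):=1$ precisely for those $j$ with $\ovl{\varphi}(y_{ij}^\ell)=1$ for all $i$; this makes $\bigvee_j z_j$ true (witnessed by $j_0$) and automatically satisfies every implication $\neg z_j\vee y_{ij}^\ell$, so the full set $C$ is satisfied.

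For the ``if'' direction, I would take a satisfying assignment $\psi$ of $(V,C)$, read off $w:=\psi(x_1)\cdots\psi(x_\ell)$ from $\psi|_X$, and again apply Lemma~\ref{lem:induction}: since $\psi$ satisfies $C_0\cup\bigcup_t C_t$, all equivalences $\Psi_{ij}^t$ hold under $\psi$, and the uniqueness clause of the lemma forces $\psi$ to coincide on the token variables with the canonical extension of $\psi|_X$. Consequently $\psi(y_{ij}^\ell)=1$ iff $q_j\in q_i.w$. The clause $\bigvee_j z_j$ yields some $j_0$ with $\psi(z_{j_0})=1$, and the clauses $\neg z_{j_0}\vee y_{ij_0}^\ell$ then deliver $\psi(y_{ij_0}^\ell)=1$ for every $i$, i.e.\ $q_{j_0}\in q_i.w$ for every $i$. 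Hence $w$ is $D_3$-synchronizing.

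For the 1-1 correspondence I would combine the two directions above with the bijection between words in $\Sigma^\ell$ and truth assignments on $X$ noted at the start of the construction: the restriction-to-$X$ map from satisfying assignments of $(V,C)$ has image exactly the set of truth assignments corresponding to $D_3$-synchronizing words of length $\ell$, and each such word pulls back to a unique $X$-restriction. I do not foresee a substantive obstacle; the only subtlety is confirming that \eqref{eq:long}--\eqref{eq:short2} really are equivalent to $\bigwedge_{i,j}\Psi_{ij}^t$ in the degenerate cases where $P_0(q_j)$ or $P_1(q_j)$ is empty, and this has already been addressed in the discussion following the clause list.
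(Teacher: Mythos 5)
Your proposal is correct and follows essentially the same route as the paper, which proves the theorem simply by summarizing the preceding discussion: Lemmas~\ref{lem:moves} and~\ref{lem:induction} for the token variables, the CNF equivalence of the $C_t$ with $\bigwedge_{i,j}\Psi_{ij}^t$, the equisatisfiability of $S$ with \eqref{eq:synchronization}, and the variable/clause counts for polynomiality. You also correctly phrase the bijection in terms of \emph{restrictions} of satisfying assignments to the letter variables (since the synchronization variables need not be uniquely determined), which matches the theorem's statement.
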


\begin{rmk}
{\rm We do not claim that the above reduction of D3W to SAT is optimal. For instance, it is possible to reduce the number of variables by getting rid of the letter variables. Namely, for each pair of $i,j\in\{1,\dots,n\}$ and each $t\in\{1,\dots,\ell\}$, one could take the clause
\begin{equation}
\label{eq:longnew}
\neg y_{ij}^t\vee\bigvee_{q_h\in P_0(q_j)}y_{ih}^{t-1}\vee\bigvee_{q_k\in P_1(q_j)}y_{ik}^{t-1}
\end{equation}
instead of the clauses in \eqref{eq:long} and the set of clauses of the form
\begin{equation}
\label{eq:shortnew}
y_{ij}^t\vee \neg y_{ih}^{t-1}\vee \neg y_{ik}^{t-1}\ \text{ for $h$ and $k$ such that}\ q_h\in P_0(q_j)\ \text{and}\ q_k\in P_1(q_j)
\end{equation}
instead of the ones in~\eqref{eq:short1} and~\eqref{eq:short2}. It is easy to see that \eqref{eq:long} and \eqref{eq:longnew} are equisatisfiable, and so are the sets of clauses in~\eqref{eq:short1},~\eqref{eq:short2} on the one hand and in \eqref{eq:shortnew} on the other.

We have preferred to keep the letter variables because of the fact mentioned in Theorem~\ref{thm:reduction}: if a $D_3$-synchronizing word of length $\ell$ exists, we can immediately recover it from the restriction of a satisfying assignment to the letter variables.}
\end{rmk}

\section{Experimental results}
\label{experimental}
Here we overview our experiments and present some of their results. Our basic procedure has been organized as follows.
 \begin{enumerate}
\item[1.] A positive integer $n$ (the number of states) is fixed. In the experiments which results we report here, we have considered $n\le 100$.
\item[2.] A random NFA $\mathscr{A}$ with $n$ states and 2 input symbols is generated. We have used two models of random generation that are specified below.
\item[3.] We check whether $\mathscr{A}$ has an input symbol whose action is defined at each state. If it is not the case, the NFA $\mathscr{A}$ cannot be $D_3$-synchronizing, and we return to Step~2 to generate another random NFA.
\item[4.] A positive integer $\ell_0$ (the hypothetical length of the shortest $D_3$-syn\-chro\-niz\-ing word for $\mathscr{A}$) is chosen. Initially, we chose $\ell_0$ to be close to $n$ but, as our early experiments have revealed, it is much more practical to start with smaller values of $\ell_0$. We introduce three integer variables $\ell_{\min}$, $\ell$, and $\ell_{\max}$ and initialize them as follows: $\ell_{\min}:=1$, $\ell:=\ell_0$, $\ell_{\max}:=2\ell_0$.
\item[5.] The pair $(\mathscr{A},\ell)$ is encoded into a SAT instance as described in Section~\ref{encoding}.
\item[6.] A SAT solver is invoked to solve the SAT instance obtained in Step~5. We have used MiniSat 2.2.0; see~\cite{Minisat} for a description of the underlying ideas of MiniSat and~\cite{Minisat_page} for a discussion and the source code of the solver.
\item[7.] The binary search on $\ell$ is performed. In more detail, if the SAT solver returns YES on the encoding of the pair $(\mathscr{A},\ell)$, we first check whether or not $\ell=\ell_{\min}$. If $\ell=\ell_{\min}$, then $\ell$ is the length of the shortest $D_3$-syn\-chro\-nizing word for $\mathscr{A}$, and we go to Step~2 to generate another random NFA. If $\ell>\ell_{\min}$, we update the variables $\ell_{\max}$ and $\ell$ by letting
    \begin{gather*}
    \ell_{\max}:=\ell,\quad \ell:=\lfloor\frac{\ell_{\min}+\ell_{\max}}2\rfloor,
    \end{gather*}
    keep the value of $\ell_{\min}$ and go to Step~5. If the SAT solver returns NO on the encoding of the pair $(\mathscr{A},\ell)$, we check whether or not $\ell=\ell_{\max}$. If $\ell=\ell_{\max}$, we interpret this as the evidence that the NFA $\mathscr{A}$ fails to be $D_3$-syn\-chro\-nizing\footnote{Of course, the equality $\ell=\ell_{\max}$ only means that $\mathscr{A}$ has no $D_3$-synchronizing word of length $\le 2\ell_0$, and it is not excluded, in principle, that the NFA is $D_3$-synchronizing but its shortest $D_3$-synchronizing word is very long. However, by suitable preprocessing and choosing an appropriate value of the parameter $\ell_0$, we have got rid of  the ``bad'' cases when the SAT solver returns NO and $\ell=\ell_{\max}$ in our experiments.} and go to Step~2 to generate another random NFA.
    If $\ell<\ell_{\max}$, we update the variables $\ell_{\min}$ and $\ell$ by letting
    \begin{gather*}
    \ell_{\min}:=\ell+1,\quad \ell:=\lceil\frac{\ell_{\min}+\ell_{\max}}2\rceil,
    \end{gather*}
    keep the value of $\ell_{\max}$ and go to Step~5.
\end{enumerate}

We implemented the algorithm outlined above in C++ and compiled with GCC 4.9.2. In our experiments we used a personal computer
with an Intel(R) Core(TM) i5-2520M processor with 2.5 GHz CPU and 4GB of RAM. For each fixed $n$, up to $400$ NFAs that passed Step~3 were analyzed. The average calculation time (for one NFA) was 400 seconds for $n = 30$ and 4350 seconds for $n = 60$.

The two models we used for random generation of an NFA $\mathscr{A}=(Q,\Sigma,\delta)$ with $n$~states and 2 input symbols are the \textit{uniform model} based on the uniform distribution and the \textit{Poisson model} based on the Poisson distribution with some parameter $\lambda$. For each state $q\in Q$ and each symbol $s\in\Sigma$, we first choose a number $k\in\{0,1,2,\dots,n\}$ that serves as the cardinality of the set $\delta(q,s)$. In the uniform model, each $k$ is chosen with probability $\frac1{n+1}$ while in the Poisson model with parameter $\lambda$, each $k<n$ is chosen with probability $e^{-\lambda}\frac{\lambda^{k}}{k!}$ and $n$ is chosen with probability $1-e^{-\lambda}\sum_{k=0}^{n-1}\frac{\lambda^{k}}{k!}$. With $k$ being chosen, we proceed the same in both models, by choosing a $k$-element subset from all $\binom{n}{k}$ subsets of $Q$ with cardinality $k$ uniformly at random and letting $\delta(q,s)$ be the chosen subset.

In each of the two models, it is easy to estimate the fraction of automata that survive Step 3. The corresponding results are stated in the following proposition which proof amounts to straightforward calculations and is therefore omitted.

\begin{pro}
\label{prop:probabality}
The probability that a random NFA with $n$~states and $2$ input symbols has an input symbol whose action is defined at each state is
\begin{equation}
\label{eq:unifom}
 2(1-\frac{1}{n+1})^{n}-(1-\frac{1}{n+1})^{2n}
\end{equation}
if the NFA is generated under the uniform model and
\begin{equation}
\label{eq:poisson}
2(1-e^{-\lambda})^n-(1-e^{-\lambda})^{2n}
\end{equation}
if the NFA is generated under the Poisson model with parameter $\lambda$.
\end{pro}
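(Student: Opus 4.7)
The plan is to apply inclusion-exclusion to the events $A_0$ and $A_1$, where $A_s$ denotes the event that the input symbol $s\in\{0,1\}$ has its action defined at every state of the randomly generated NFA $\mathscr{A}$. The event we are after is $A_0\cup A_1$, so once $P(A_0)$, $P(A_1)$, and $P(A_0\cap A_1)$ are in hand, the formula drops out from $P(A_0\cup A_1)=P(A_0)+P(A_1)-P(A_0\cap A_1)$.

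First, I would compute $P(A_s)$ for a single symbol. Note that $\delta(q,s)\neq\varnothing$ is the same as saying the cardinality $k$ drawn for $\delta(q,s)$ is strictly positive. Because the construction draws the cardinalities $|\delta(q,s)|$ independently across the $n$ states and then chooses a set of that cardinality uniformly, $P(A_s)=p^n$ where $p=\Pr[k\ge 1]$ is determined purely by the distribution used. Under the uniform model, $k$ is uniform on the $n+1$ element set $\{0,1,\dots,n\}$, so $\Pr[k=0]=\frac{1}{n+1}$ and $p=1-\frac{1}{n+1}$. Under the Poisson model, $\Pr[k=0]=e^{-\lambda}$ straight from the Poisson mass function, so $p=1-e^{-\lambda}$.

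Second, I would exploit that the constructions of $\delta(\cdot,0)$ and $\delta(\cdot,1)$ are performed independently of one another in both models, so $A_0$ and $A_1$ are independent events and $P(A_0\cap A_1)=P(A_0)P(A_1)=p^{2n}$. Inclusion-exclusion then gives
\[
P(A_0\cup A_1)=2p^n-p^{2n},
\]
and substituting the two values of $p$ computed above produces formulas \eqref{eq:unifom} and \eqref{eq:poisson} respectively.

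There is no real obstacle here: the proposition unwinds into a two-line inclusion-exclusion once the independence structure of the sampling procedure is laid out, which is presumably why the authors describe the calculation as straightforward and omit it. The only mild care needed is to observe that the chosen-subset step (picking a uniformly random $k$-subset of $Q$ once $k$ is drawn) is irrelevant for the event under consideration, since $A_s$ depends solely on the indicator $k\ge 1$.
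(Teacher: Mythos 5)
Your proof is correct: the independence of the cardinality draws across states and across the two symbols, together with $\Pr[k=0]=\frac{1}{n+1}$ (uniform) and $\Pr[k=0]=e^{-\lambda}$ (Poisson, unaffected by the truncation at $k=n$ since $n\ge 1$), gives exactly the inclusion–exclusion formula $2p^n-p^{2n}$ stated in the proposition. The paper omits its proof as a ``straightforward calculation,'' and your argument is precisely that intended calculation, so there is nothing to add.
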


Observe that as $n$ grows, the expression in \eqref{eq:unifom} tends to $2e^{-1}-e^{-2}\approx 0.600$ while the expression in \eqref{eq:poisson} tends to 0. In the further discussion, we always assume that the NFA considered have passed Step 3.

For the uniform model, our experiments produced results that may seem surprising at the first glance. Namely, it turns out that for an overwhelming majority of NFAs, the length of the shortest $D_3$-synchronizing word is equal to 2, and this conclusion does not depend on the state number $n$, at least within the range of our experiments (recall that we have considered $n\le 100$). For an illustration, see
Figure~\ref{uniform model} in which the horizontal axis is the length of the shortest $D_3$-synchronizing word and
the vertical axis is the number of NFAs. The blue and the yellow circles represent NFAs with 20 and 30 states respectively.\begin{figure}[th]
\includegraphics [width=0.99\textwidth]{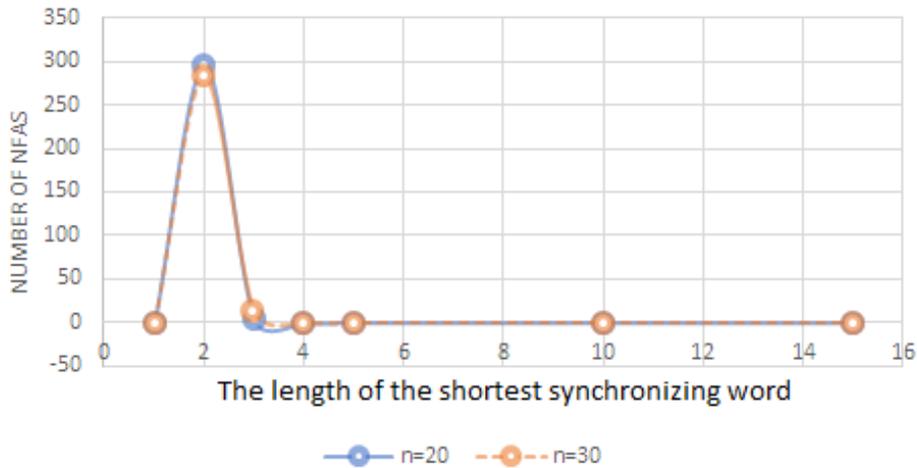}
\caption{Distributions of~20- and 30-state NFAs generated under the uniform model according to the length of their shortest $D_3$-synchronizing words}
\label{uniform model}
\end{figure}

Insofar, we have got no rigorous theoretical explanation of the observed phenomenon. However, even a quick analysis of the uniform model reveals that NFAs it produces should tend to have rather short $D_3$-synchronizing words. Indeed, if an NFA $\mathscr{A}=(Q,\Sigma,\delta)$ with $n$~states and 2 input symbols is generated under the uniform model, then the expected cardinality of the set $\delta(q,s)$ is $\frac{n}2$ for every $q\in Q$ and $s\in\Sigma$. Therefore the expected size of every set of the form $q.w$ with $w\in\Sigma^2$ is close to $n$. Hence it is quite likely that $\bigcap_{q\in Q}q.w\ne\varnothing$ for some word $w$ of length~2, which is then a $D_3$-synchronizing word for $\mathscr{A}$.

Some sample experimental results for the Poisson model are presented in Figure~\ref{n60l5}. The three histograms in Figure~\ref{n60l5} correspond to 60-state NFAs generated under the Poisson models with three different values of the parameter $\lambda$ and demonstrate how these NFAs are distributed according to the length of their shortest $D_3$-synchronizing words. As in Figure~\ref{uniform model}, the horizontal axis is the length of the shortest $D_3$-synchronizing word and the vertical axis is the number of NFAs.
\begin{figure}[p]
 \includegraphics [width=0.95\textwidth]{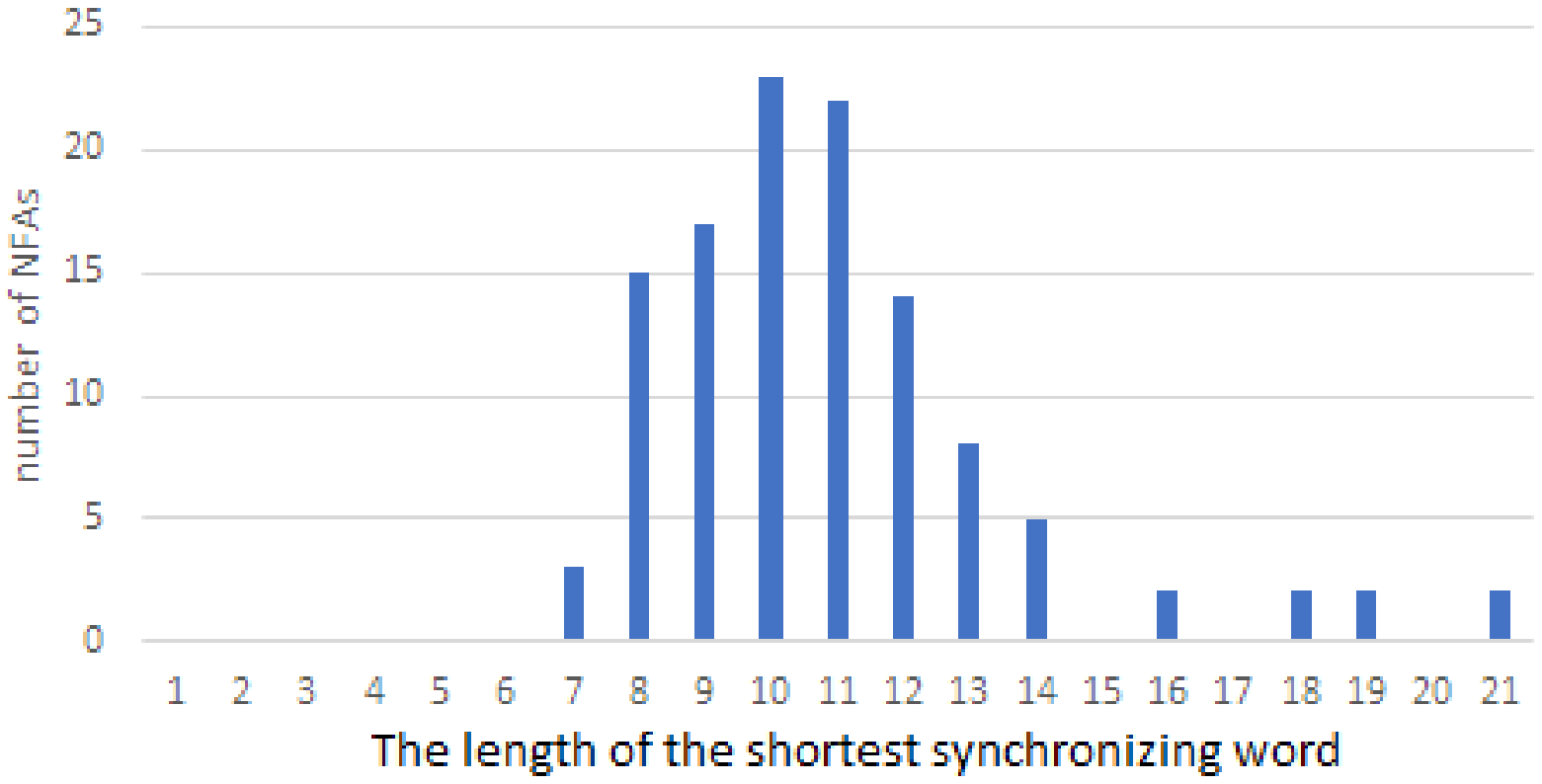}
 \includegraphics [width=0.95\textwidth]{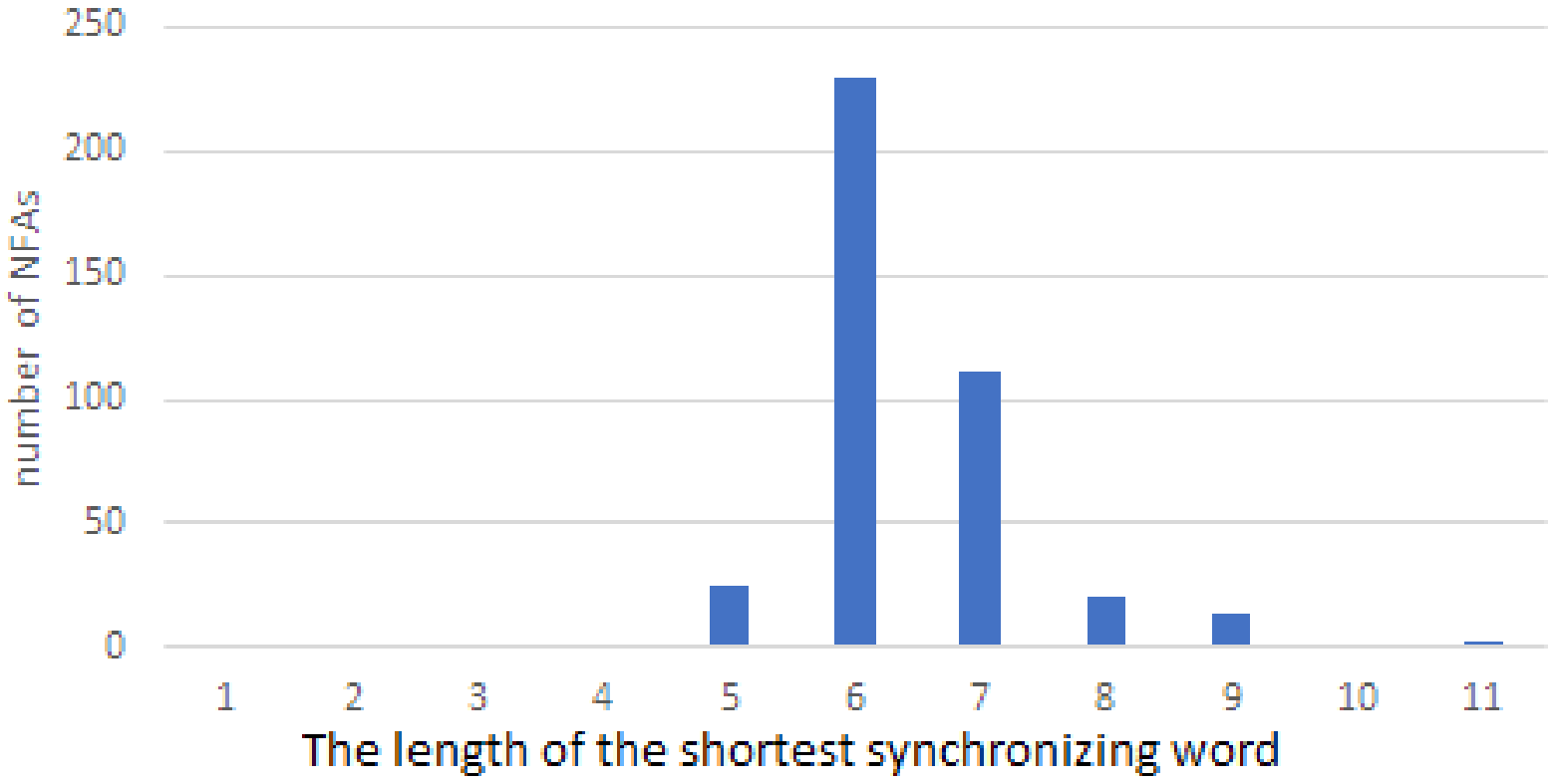}
 \includegraphics [width=0.95\textwidth]{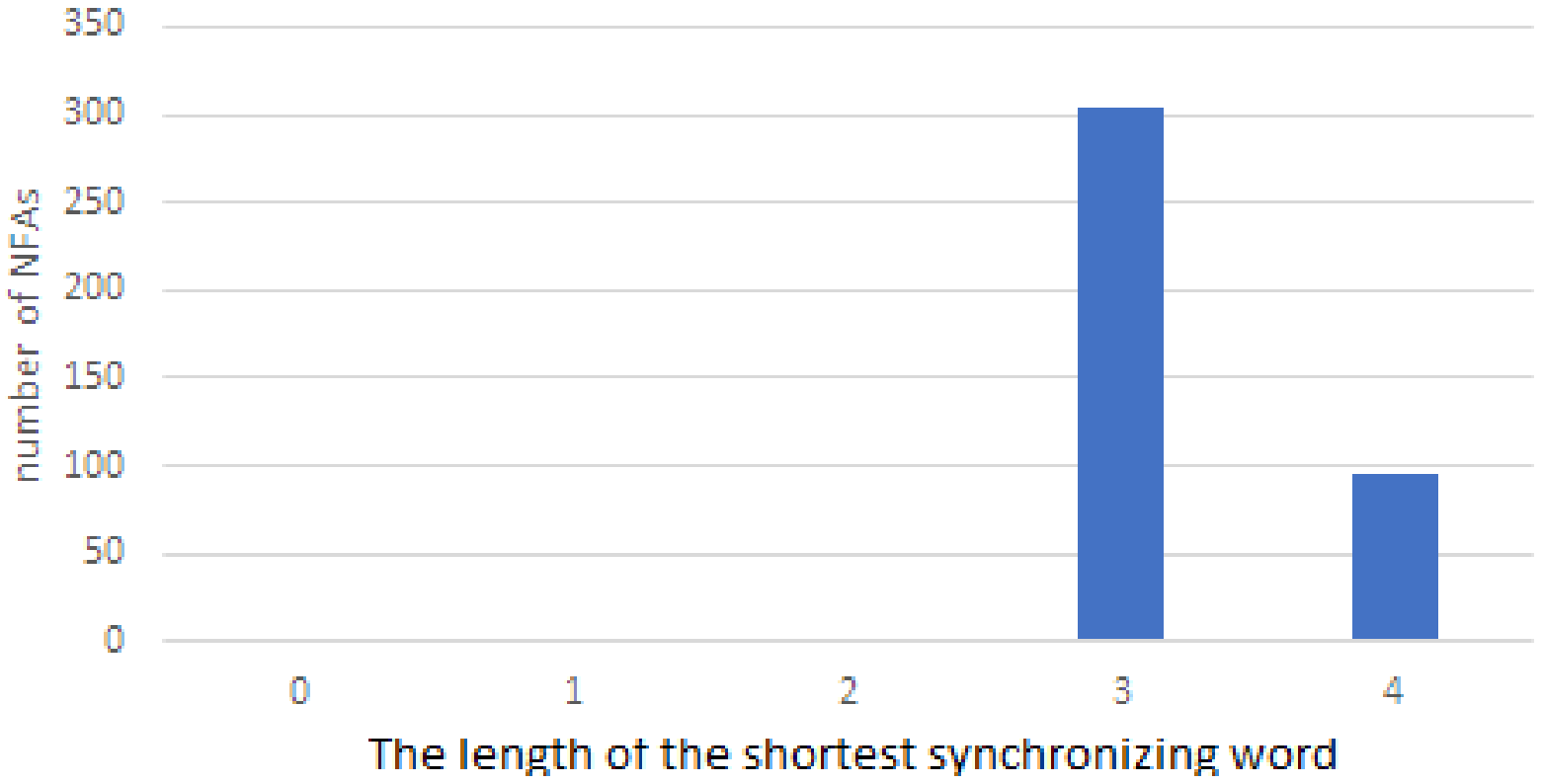}
    \caption{Distributions of~60-state NFAs generated under the Poisson models with $\lambda=1$ (top), $\lambda=2$ (middle), $\lambda=5$ (bottom) according to the length of their shortest $D_3$-synchronizing words}
    \label{n60l5}
\end{figure}

We see that if the number of states is fixed, the expected length of the shortest $D_3$-synchronizing word decreases as the parameter $\lambda$ grows. This can be explained by an informal argument of the same flavour as the reasoning used above to explain the outcome of our experiments with NFAs generated under the uniform model. Indeed, if an NFA $\mathscr{A}=(Q,\Sigma,\delta)$ with $n$~states and 2 input symbols is generated under the Poisson model wiht parameter $\lambda$, it follows from a basic property of the Poisson distribution that $\lambda$ is close to the expected cardinality of sets $\delta(q,s)$ for every $q\in Q$ and $s\in\Sigma$. The larger are these sets, the smaller is the value of $\ell$ such that the expected size of sets of the form $q.w$ with $w\in\Sigma^\ell$ becomes close to $n$.

Our experiments also show that if the parameter $\lambda$ is fixed, the expected length of the shortest $D_3$-synchronizing word grows with the number of states but the growth rate is rather small. For each $n\le 100$, we have calculated the average length $E_1(n)$ of the shortest $D_3$-synchronizing words for $n$-state NFAs generated under the Poisson model with $\lambda=1$. Then, using the method of least squares, we have searched for an explicit function of $n$ that approximates $E_1(n)$ and found the following solution:
\[
E_1(n)\approx (0.57+ 0.66\ln n)^{2}.
\]
For $\lambda=2$, the same procedure has led to the following approximation of the similarly defined quantity $E_2(n)$ calculated from our experimental data:
\[
E_2(n)\approx (0.77+ 0.43\ln n)^{2}.
\]
Similar approximations have been obtained for other values of the parameter $\lambda$.

\section{Conclusion and future work}
\label{final}

We have presented an attempt to approach the problem of computing a $D_{3}$-synchronizing word of minimum length for a given NFA via the SAT-solver method. We think that our results do provide some evidence for this approach to be feasible in principle. Of course, they constitute  only the very first steps, and more work is needed to improve the performance of our implementation and to enlarge its range.

We see several resources for improvements. First of all, we may try to modify the basic encoding described in Section~\ref{encoding}. There are several options for such modifications that all look promising but it is hard to predict a priori which one will prove to be the most efficient, and we have to go through several rounds of trial-and-error. As an example of a relatively successful trial, we briefly report one of the modifications that have already been implemented by the first author.

As mentioned in the description of our basic algorithm in Section~\ref{experimental}, every $D_3$-syn\-chro\-nizing NFA $\mathscr{A}$ must have an everywhere defined input symbol. If all input symbols of $\mathscr{A}$ are everywhere defined, one can use the transformations described in \cite[Lemma~8.3.8]{Ito} or \cite[Section~2]{DonZantema17} to convert $\mathscr{A}$ into a DFA $\mathscr{A}'$ such that $\mathscr{A}$ is $D_3$-syn\-chro\-nizing if and only if $\mathscr{A}'$ is synchronizing and the minimum length of $D_3$-syn\-chro\-nizing words for $\mathscr{A}$ is the same as the minimum length of synchronizing words for $\mathscr{A}'$. Since there are powerful methods to compute shortest synchronizing words for DFAs with up to 350 states (see, e.g., \cite{KKS15}), we can apply one of these methods to $\mathscr{A}'$. Hence, we can restrict ourselves to the case when one of the input symbols of $\mathscr{A}$ is not everywhere defined.

If we consider only NFAs with 2 input symbols, 0 and 1, say, we conclude that we may assume that 0 is everywhere defined while 1 is not. Every $D_3$-syn\-chro\-nizing word for such an NFA should start with the symbol 0. Therefore one can start our solitaire-like game $\Gamma$ described in  Section~\ref{encoding} from the position that arises after the first application of 0, and the  basic encoding can be modified accordingly\footnote{If we re-use the illustrative example in Figure~\ref{moves}, the new initial position for this example will be the one shown in bottom left.}. For an NFA with $n$ states and $m$ transitions, this preprocessing allows one to save $n^2$ variables and around $n^2+2m$ clauses in
the resulting instance of SAT. Our experiments show that this modification indeed reduces the execution time of solving D3W-instances for NFAs with $\ge20$ states, and the average time decrease reaches 50\% for NFAs with $\ge50$ states. Also, the modification has allowed us to solve D3W for NFAs with more than 100 states which size was out of reach with the basic encoding.

Of course, the efficiency of our approach depends not only on the way we encode the problem but also on software and hardware used in the implementation. Besides optimizing our own code, we have plan to experiment with more advanced SAT-solvers, namely, with CryptoMiniSat \cite{Crypto} and lingeling \cite{lingeling}. Using more powerful computers constitutes yet another obvious direction for improvements. In particular, our approach is clearly amenable to parallelization since calculations needed for different automata are completely independent so that in principle, we can work in parallel with as many automata as many processors are available.

Our future work should include theoretical explanations for phenomena observed in our experiment as well as extending our study to automata with arbitrarily many input symbols and to other versions of NFA synchronization such as $D_1$- and $D_2$-synchronization mentioned in Section~\ref{sec:intro}.

\end{document}